\newcommand{\mv}[1]{{\boldsymbol{\mathrm{#1}}}}
\newcommand{\E}{  \mathbb{E} }
\newcommand{\V}{  \mathbb{V} }
\newcommand{\proper}{\mathsf}
\newcommand{\plN}{\proper{LN}}
\newtheorem{Theorem}{Theorem}
\theoremstyle{definition} 
\newtheorem{Definition}{Definition} 
\theoremstyle{remark} 
\newcommand{\Lund}{Statistics, Lund university, Sweden}
\newcommand{\Gothenburg}{Department of Biological and Environmental Sciences,\\ University of Gothenburg, Sweden}
\begin{document}

\begin{center}
\textbf{\textsf{\huge
Estimating the unobservable moose - converting index to population size using a Bayesian Hierarchical state space model}}\\
\vspace{5mm}
{\Large \scshape{Jonas Wallin$^1$ and Kjell Wallin$^2$}}

\vspace{3mm}

\textit{$^1$\Lund}\\
\textit{$^2$\Gothenburg}\\ 

\vspace{3mm}
\begin{minipage}{0.9\textwidth}
{\small
{\bf ABSTRACT.

Indirect information on population size, like pellet counts or volunteer counts, is the main source of information in most ecological studies and applied population management situations. Often, such observations are treaded as if they were actual measurements of population size. This assumption results in incorrect conclusions about a population's size and its dynamics. We propose a model with a temporal varying link, denoted countability, between indirect observations and actual population size.
We show that, when indirect measurement has high precision (for instance many observation hours) the assumption of temporal varying countability can have a crucial effect on the estimated population dynamic.
We apply the model on two local moose populations in Sweden. The estimated population dynamics is found to explain 30-50 percent of the total variability in the observation data; thus, countability accounts for most of the variation. This unreliability of the estimated dynamics has a substantial negative impact on the ability to manage populations; for example, reducing (increasing) the number of animals that needs to be harvested in order to sustain the population above (below) a fixed level.
Finally,  large difference in countability between two study areas implies a substantial spatial variation in the countability; this variation in itself is highly worthy of study.
}
\begin{flushleft}
{\bf Key words: Bayesian hierarchical model, State space model, Local moose population, Population size estimation, Population management}
\end{flushleft}
}
\end{minipage}

\vspace{3mm}
\end{center}

\section{Introduction}

Management of any wild animal population requires accurate information on its size and dynamics.
Unfortunately, this basic information is both notoriously difficult and expensive to acquire. In most situations only indices of population size, such as tracks, feces or numbers animals caught, are available  \cite{sutherland2006ecological}. Understanding the link between such indices and population size is fundamental to good management, as the link between these indices and the actual population size could well have a temporal and spatial variation of its own.
 In practice, this issue is often ignored and the indices are treated as true population measurements; resulting in artifactual patterns in estimated population sizes and unwanted consequence for management \cite{pollock2002large, williams2002analysis}.

In many applications, the actual population size is required, for instance when determining allowed take for sustainable harvest, estimation risk of extinction, or assessing spatial and temporal variability in growth rates \cite{engen1997harvesting, lande2003stochastic}. 
There are well established capture-recapture of distance methods for obtaining unbiased estimates of population size (\cite{borchers2002estimating, williams2002analysis}). However, such methods are often expensive, especially for large mammals, and in local management situations the use of population indices often appears to be the only option, at least on a regular basis.

In this article, we propose a state space model that combines indices and unbiased population estimates to make inference about population size, using a bias factor, which we denote countability, to link them together. The novelty of our approach is allowing for temporal variation of the bias factor between inidices and population sizes. 
If statistical models ignores the variation in countability this can lead to unreasonably large estimates of population size variability, that are simple an artifact of the ignoring the variation in countability.
 This increased variability in population size can have a large effect on management decisions, as is illustrated in our example.

We propose a Bayesian hierarchical  state space model, where the population is defined as a continuous stochastic process; the processes easily incorporates irregular spaced observation.  The observation part of the state space model has two different sources, index data and, at some  times, unbiased population estimates derived from capture-recapture or distance sampling.
Since the model has hyperparameters that need to be specified, we parameterize the priors in such a way that they have a clear interpretable meaning, or in cases where no prior knowledge is available the priors are chosen so as to prefer simpler model

The remainder of the article, is composed as follows: Section two is a short introduction to state space models, including the novel concept of 'effort homogeneous observation distributions', which is an important property for certain state space models applied to population dynamics. Section three describes in detail the hierarchical state space model developed in this article. In section four, the model is applied to two spatially separate moose populations. In section five we discuss the results and point to future research direction. 
\section{Theory}
The main modeling tool in this is article is the state space model. The goal of this section is to show the effect of effort on certain observation equations commonly used with state space models in ecology. To start, we formalize the state space model:
A state space model \cite{buckland2007}, is used to model an unobserved process, $N_t$, for times $t=1,\ldots,T$. Here, and throughout the paper, $N_t$ represent the population abundance. The general state space model can be described by two equations:
\begin{align}
\label{eq:sate1}
N_t \, | \, N_{t-1}&\sim f(n_t|N_{t-1}, \mv{\Theta}),  \\
\label{eq:sate2}
Y_t \, |\, N_{t}&\sim g(y_t| N_{t}, E_t, \mv{\Theta}),
\end{align}
where (\ref{eq:sate1}) is the state equation, and (\ref{eq:sate2}) is the observation equation. Here $Y\sim g$ implies that the density of the random variable $Y$ is $g$.
$Y_t$ represent the observations, $f$ and $g$ are distributions depending on the parameters $\mv{\Theta}$, and  $E_t$  which is the observation effort (e.g. number of hours spent looking for the animals or  length of the line transects).

Often, implicitly, the distributions in observation equations for state space model in ecology fall within the following class of distributions:
\begin{Definition}
Let $g(y| E)$ denote the distribution of an observation given effort $E$ and let $Y^i \sim g(y| E=a_ie)$ for some $a_i \geq 0$.
 Then $g(y| E)$ is effort-homogeneous if $Y^0$ has the same distribution as $Y^1 + Y^2$ for all  $e$ and all $a_i$ satisfying  $a_0=1$, $a_1 + a_2 = 1$.
\end{Definition}
The definition implies that if an observation distribution is effort-homogeneous, there is no gain in splitting the observation into smaller pieces (with respect to effort), and that each unit of effort contributes equally to the information of the population size. 
Common distributions that have this property are, for instance, the Poisson distribution with linear mean function, and the mark recapture model with a fixed capture probability over repeated visits.

A larger class of distributions which contains any effort homogeneous distribution is the weakly effort-homogeneous distributions.
\begin{Definition}
Let $g(y| E)$ denote the distribution of an observation given effort $E$.
 Then $g$ is weakly effort-homogeneous if it has finite variance for all $E$ and $\E[Y|  E=e]= \E[Y| E=a_1e] + \E[Y| E=a_2e]$ and  $\V[Y| E=e] = \V[Y |E=a_1e] + \V[Y| E=a_2e]$, for any $a_1,a_2$ where $a_1 + a_2 = 1$, and $a_1 \geq0, a_2 \geq 0$.
\end{Definition}
This assumption is typically implied in for instance most survey design models. Under the assumption of different stratum in a survey design, the distribution of the observations is not be weakly effort-homogeneous, however the distribution of observations within each stratum typically is. 

An important consequence of having a distribution in the observation equation being weakly effort homogeneous is the following:
\begin{Theorem}
Let  $g(y|E)$ denote the distribution of an observation given effort $E$, and let  $g(y|N,E = e)$ denote the distribution of an observation given effort $e$ and population size $N$. If $g(y|E)$ is weakly effort-homogeneous, then  the expectation and the variance of $Y\sim g(y|E=e)$ is a linear function of $e$. Further, the variance of the random variable $\frac{Y}{e}$ is completely determined by $\V\left[  \E\left[ Y \,| N, E=1 \right]\right]$, as $e \rightarrow \infty$.
\end{Theorem}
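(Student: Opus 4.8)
The plan is to reduce the two moment conditions in the definition of weak effort-homogeneity to Cauchy's functional equation, obtain linearity of the moments from that, and then combine the conditional linearity with the law of total variance to isolate the between-population term.

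First I would set $m(e) = \E[Y|E=e]$ and $v(e) = \V[Y|E=e]$. Writing $x = a_1 e$ and $y = a_2 e$, the constraints $a_1+a_2=1$, $a_i \geq 0$ translate into arbitrary $x,y \geq 0$ with $x+y=e$, so the two defining relations become $m(x+y)=m(x)+m(y)$ and $v(x+y)=v(x)+v(y)$ for all $x,y \geq 0$. These are instances of Cauchy's functional equation. Because a count and its variance are non-negative, $m$ and $v$ are bounded below, which rules out the pathological non-measurable solutions; hence $m(e)=ce$ and $v(e)=de$ for constants $c,d\geq 0$, establishing the first claim. The same argument, applied with the population size held fixed, gives the conditional versions $\E[Y|N,E=e]=e\,\mu(N)$ and $\V[Y|N,E=e]=e\,\sigma^2(N)$, where $\mu(N)=\E[Y|N,E=1]$ and $\sigma^2(N)=\V[Y|N,E=1]$.

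For the second claim I would decompose the marginal variance with the law of total variance,
\begin{equation*}
\V[Y|E=e] = \E\!\left[\V[Y|N,E=e]\right] + \V\!\left[\E[Y|N,E=e]\right] = e\,\E[\sigma^2(N)] + e^2\,\V[\mu(N)],
\end{equation*}
using the conditional linearity just derived. Dividing by $e^2$ gives
\begin{equation*}
\V\!\left[\tfrac{Y}{e}\right] = \frac{1}{e}\,\E[\sigma^2(N)] + \V[\mu(N)],
\end{equation*}
and letting $e \to \infty$ makes the first term vanish, leaving $\V[\mu(N)]=\V[\E[Y|N,E=1]]$, as claimed.

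The step I expect to be the crux is recognizing that the quadratic-in-$e$ contribution to the marginal variance is exactly the between-population variability $\V[\E[Y|N,E=1]]$, supplied by the outer variance in the decomposition, whereas the within-population observation noise contributes only the linear term that is washed out by the $1/e^2$ scaling. I would also flag an interpretive subtlety here: taken at the marginal level the variance is genuinely quadratic, so $g(y|E)$ is \emph{not} itself weakly effort-homogeneous once $N$ is random; the linearity must be read as a property of the conditional observation model $g(y|N,E)$, and it is this conditional version that connects the hypothesis to the quantities appearing in the conclusion. Finally, I would be careful to justify the regularity needed to pin down the Cauchy solutions, leaning on the non-negativity of the moments.
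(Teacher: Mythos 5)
Your proof is correct, and for the main (second) claim it is the same argument the paper uses: the law of total variance applied to $Y/e$, with the conditional linearity in $e$ making the within-$N$ term scale as $1/e$ and vanish in the limit. Where you differ is in the first claim: the paper dismisses it with ``follows immediately from the definition,'' whereas you correctly observe that the additivity conditions are Cauchy's functional equation on $[0,\infty)$ and that some regularity (here, non-negativity of the mean of a count and of the variance) is needed to exclude non-linear solutions --- a genuine gap in the paper's one-line dispatch that your argument closes. You have also put your finger on a real inconsistency that the paper passes over silently: the second conclusion requires the \emph{conditional} law $g(y|N,E)$ to have moments linear in $e$, which does not follow from the stated hypothesis on the marginal $g(y|E)$; and conversely, if $\V\left[\E[Y|N,E=1]\right]>0$ then the marginal variance is quadratic in $e$, contradicting the first conclusion. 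Both your proof and the paper's tacitly read weak effort-homogeneity as a property of the conditional observation model, and your explicit flagging of this is the honest way to state the theorem.
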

\begin{proof}
The first part follows immediately from the definition. To prove the second statement note that 
$
\V\left[\frac{Y}{e}| E=e \right]  =\V\left[  \E\left[ \frac{Y}{e} \,| N,E=e \right]\right] + \E\left[  \V\left[ \frac{Y}{e} \,| N,E=e \right]\right].
$
Using the first statement, the first term equals $\V\left[ E\left[ Y \,| N,E=1 \right]\right]$, and the second term equals
$
\frac{1}{e} \E\left[ \V \left[ Y  \,| N, E=1 \right]\right],
$
which goes to zero as $e \rightarrow \infty$.
\end{proof}
The theorem has the following implication for models where the observation distribution are weakly effort-homogeneous: If the effort is large, then all the variation of the observations is completely explained by the population abundance, $N_t$. This can have large consequences for misspecified model; for example assume that the observation equation comes from a, simple, capture-recapture model, and that one \emph{incorrectly} assumes that the capture probability is constant across years. Then the yearly variation in the number of captured caused by varying capture probability is incorrectly moved to higher yearly variation of $N_t$, thus it leads to an overestimation of the population's variability.

\section{Model}
This section, we build a state space model that will be applied in the result section below. The model describes a male and female population jointly.
It is common in ecology to observe populations at irregular occasions, so in the model the population needs to be defined continuously.  To address this, we use a geometric Brownian motion to model the population size.  We, of course, tailor the model to our specific example, however, the specifics --like time of harvest or time between observations-- can obviously easily be altered to fit other data sets.

\subsection{State equation}

Since we model the male and female population jointly, the latent process is defined as a bivariate vector $\mv{N}_t = [ N^F_t,N_t^M ]$.  In our application  $\mv{N}_t$ represent the population prior to the hunting season, when the index data are recorded, and $\mv{N}_{t+1/2}$ represents the population after hunting season, when the survey data are recorded. This could of course be generalized to observations at any time points. Note that the index time doesn't represent actual time; the period between $\mv{N}_t$ and $\mv{N}_{t+1/2}$ is approximately $3$ months whereas the period between $\mv{N}_{t+1/2}$  and $\mv{N}_{t+1}$ is approximately $9$ months, in our example. 
The breeding season occurs within the period $[t-1/2, t]$, and is incorporated in the model by the following state equation:
\begin{align}
\label{eq:state1}
N^F_{t} | \, N^F_{t-1/2} &\sim  \plN \left( \log( r N^F_{t-1/2} ) , \, 0.75 \sigma_F^2  \right), \\
N^M_{t} \, | N^F_{t-1/2}, N^M_{t-1/2}  &\sim  \plN \left( \log( N^M_{t-1/2} + rN^F_{t-1/2}),\,  0.75  \sigma_M^2  \right), \nonumber
\end{align}
where $|$ denotes conditioning on, $\sim$ denotes equal in distribution, and $\plN$ denotes log normal distribution, thus if $x\sim \plN(log(\mu),\sigma^2)$ then $f(x; \mu,\sigma^2) = \frac{1}{x\sigma\sqrt{2\pi}} e^{- \frac{(log(x) - \mu)^2}{2\sigma^2}}$.
The expected increase due to breeding is determined solely by the female population and is controlled by the recruitment rate $r$. The parameter $\sigma_F^2$, and $\sigma_M^2$ represent the yearly variance of the female and male populations, respectively. Since it is approximately nine months between $t-1/2$ and $t$ the variance in the state is modeled by $0.75 \sigma_F^2 $. 
 Note that the model is unstable in the sense that the population would grow to infinity if there were no hunting. We initially used a more advanced model with carrying capacity to remedy this. However, in the presence of large harvest (as in our data) and a population size far from its carrying capacity, the carrying capacity has little to no impact and thus it was removed to simplify the interpretability of the model. This more advanced model is presented in the appendix.

The  hunting season takes place within the period $[t,t +1/2]$, and the effect for the population is described by the following state equation:
\begin{align*}
N^F_{t+1/2} | N^F_{t} & \sim  \plN \left( \log( N^F_{t} -H^F_t) , 0.25 \sigma_F^2 \right) ,\\
N^M_{t+1/2} | N^M_{t} & \sim  \plN \left( \log( N^M_{t} -H^M_t) , 0.25 \sigma_M^2 \right),
\end{align*}
where $\mv{H}_t=[H_t^F,H_t^M]$ represent the number of animals killed, which is known.

\subsection{Observation equation}
For the observation data in this article, there are two distinct observation equations. First, there are unbiased survey observations from a distance sampling or capture-recapture procedure \cite{buckland1993distance}. Ideally one would incorporate the distance sampling model into the observations equation. But, since we only have access to the mean and variance of the population estimates from the distance sampling, so we define the following state space model:
\begin{align*}
Y^F_{t,1} | N^F_{t+1/2} &\sim  \plN \left( \log(N^F_{t+1/2}), \sigma^2_{t,D} \right) , \\
Y^M_{t,1} | N^M_{t+1/2}&\sim  \plN \left( \log)N^M_{t+1/2}), \sigma^2_{t,D} \right).
\end{align*}

Second, the index data is described by the following equations:
\begin{align*}
Y^F_{t,2}| N^F_{t} &\sim  Po(a_tE_tN^F_{t}), \\
Y^M_{t,2}| N^M_{t} &\sim  Po(a_tE_tN^M_{t}), 
\end{align*}
where $Po$ denotes the Poisson distribution. Here $E_t$ is the effort spent collecting the observations, and $a_t$ the bias factor, denoted  countability. This factor is closely related to observability, however we want to emphasize that it is not only a factor due to observation but a general factor coming from biased measurement. Note that the Poisson distribution above is effort-homogeneous.
\subsection{Temporal countability}

Typically,  it is assumed that countability, $a_t$, is constant across years.
 It is not hard to find situations where this assumption is unrealistic. To incorporate varying countability, we introduce a hierarchical layer in the model:
\begin{align}
\label{eq:a_t}
a_t \sim \plN( \log(\bar{a}), \sigma_a^2).
\end{align}
This layer creates an overdispersion for the distribution of observations similar to the distributions in \cite{knape2011observation}.

Since the Poisson distribution is effort-homogeneous, if the effort is large the variation between $\frac{Y^F_{t,2}}{E_t}$ and $\frac{Y^F_{t-1,2}}{E_{t-1}}$ has two sources: the countability, $a_t$, variation and the variation in the population, $N^F_t$ (the same relations obviously apply for the both female and male populations). From a management point view it is fundamental to know what causes the variation in index $\{Y_t \}_{t=1}^T$ in order to use the index data in managing the population.
The larger effect the population dynamics has on the variability of the index data, compared to the countability variation, the better one can manage the population using solely the index data. In this sense the countability can be interpreted as observation error on the state space model.

\subsection{Priors}
The choice of priors often has a large effect on the posterior distributions in a Bayesian hierarchical model, and the hierarchical framework makes it hard to formulate reasonable and interpretable priors. To address these issues we adapt the framework of \cite{martins2014penalising}.

As mentioned above it is fundamental to know what causes the variability for the observations, $\mv{Y}^F_t$. In the proposed model this variation is controlled by the parameters $\sigma_F,\sigma_M$ and $\sigma_a$. 
For instance, if $\sigma_F = 0$, then the state space equation for $N^F_{t}$ is a deterministic equation, implying that  knowing $N^F_1$ we can perfectly predict any future $N^F_t$.   The larger $\sigma_F$ is the more  $N^F_{t}$ is allowed to deviate from the deterministic path.
If $\sigma_a=0$, means that all the variability of $Y_t$ is caused by  the population dynamics. Larger$\sigma_a$  implies less information of the population size in $Y^F_t$. 

If no information is available about a parameter in the model, we wish to set the priors for the parameter so that more complex models are punished in favor of the simpler alternative. For instance, for the parameters $\sigma_F,\sigma_M$ and $\sigma_a$, a $\Gamma(\alpha,\beta)$ prior, with $\alpha \leq 1$, could be used. This would  punish more complex models, since the simplest model occurs when the parameters are zero and the larger the parameter values is the more variables the model has, see \cite{martins2014penalising}. However, this is neither a reasonable nor an easily interpretable joint prior (the product of the independent priors). For instance it is unreasonable that $\sigma_F$ and $\sigma_M$ are independent, nor is it easy to see how the choice of the independent priors affect each other.

To address this we use the following prior:
\begin{align*}
\tau  &\sim \Gamma(1,\beta_{\tau}), \,  \omega  \sim \proper{B}(\alpha_B,\beta_B), \, \nu     \sim \Gamma(\alpha_{\nu}, \beta_{\nu}),\\
\sigma_F &=  \omega \tau\, , \sigma_M =  \omega \tau \nu\, , \sigma_a = (1-\omega) \tau,
\end{align*}
where $ \proper{B}$ denotes the Beta distribution. At first glance, it appears that we have constructed a very complicated prior. However each component has a clear interpretable effect on the joint prior. The parameter $\tau$ could loosely be thought of as controlling the complexity of the model -- flexibility for $\mv{N}_t$-- since, if $\tau$ is zero the model is a deterministic model with the only variability coming from the Poisson observation equation. 

The parameter $\omega$ controls the source of the variability in the data. If $\omega$ is  one the variability in the data comes entirely from population dynamics, $\mv{N}_t$, whereas if $\omega$ is zero the variability comes entirely from observation error. Finally the parameter $\nu$ is the ratio between male and female variation. Thus if $\nu$ equals the variability of  the male and female populations is equal.

\section{Application, Local Moose Management}
The moose in Scandinavia is extensively hunted and more than 200000 animals are shot yearly. The populations are managed locally by the county board, hunters and forestry companies. The large involvement of the latter is because the moose severely damage young pine stands, causing substantial economic loss for  forestry. The latter two interests have opposing objectives: industry can only accept a population below a maximum size, while hunters only accept a population above a minimum size. 

The data, in the example below, comes from two moose management areas in the south of Sweden between the years 2000 and 2013 (Gunnilbo area: 977 km$^2$, position: 59$\textdegree$ 48'N, 15$\textdegree$ 51'E and Tiveden area: 851 km$^2$; position: 58$\textdegree$79'N, 14$\textdegree$ 55'E). For both these moose populations there are two important periods during the year:
 calving, which takes place at the end of May, and hunting season, which starts in October and ends in February. Most animals are shot during the first weeks of the season. In Sweden the hunting harvest is the main cause of moose mortality and accounts for 85-95\% of all deaths (own observations). Almost all shooting of moose is recorded with exception of illegal hunting, which is assumed to be very small. Information about the state of the population is collected during the first five hunting days, where hunters count the number of animals observed, these observations are the index data. In addition, during some years there are also unbiased estimates of the population size after the hunting season (late January-February), using either capture-recapture \cite{borchers2002estimating} or distance sampling \cite{buckland2005distance} methods.

\section{Results}
Our main objective is to estimate the population size and its dynamics. The novelty of our model is allowing for a countability that varies among years. To highlight this, we also estimate the population using a model with fixed countability, $a_t=a$ for all $t$.
Figures  \ref{fig:NGunnilbo}  and \ref{fig:NTiveden}, show the posterior median of the population size for the two models and the two areas. It is apparent that the yearly population size estimate of the fixed model is almost completely determined by $\frac{Y_{t,1}}{aE_t}$ (the circles in the figures), whereas the variable model is not that tied to the observation index. As a result of this $\sigma$, the parameter defining the populations variability, is larger for the fixed model compared to the variable model, see Figure \ref{fig:params}.

Recall, that the parameter $\omega$ controls from which source the variability of the index data comes, either  process dynamics or observation error. The posterior distribution of $\omega$ (using a uniform prior)  in Figure \ref{fig:params} shows that the data implies that a large portion of the variance is explained by the observation error, variability of $a_t$, for both areas. Especially for Gunnilbo, where the major part of the variance is explained by observation error, as is apparent since most of the mass of the posterior distribution is above 0.5. Thus the data gives little support for using a fixed countability.

The posterior distributions of the mean countability, $\bar{a}$ and $a$, for the variable and fixed models respectively, are shown in Figure \ref{fig:as}. Notice that the distributions from Gunnilbo and Tiveden are almost disjunct. Thus the countability is completely different for the two study areas, indicating strong spatial variation for countability. In studying the posterior means of $a_t$ (no figure) for the variable model we can not find a significant temporal trend, however there seem to be a common annual variability factor for both areas, in that there is significant correlation for the $a_t$s  between the two areas (p=0.048, Kendell's tau).

To highlight the higher variability in estimated population size for the fixed compared to variable model, we studied three management situations and what action the two models would suggest. Assume that the female population size of Gunnilbo, $N^F_t$, is five hundred animals and to manage next year's population one can propose a harvest this year of $H_t$ females. Figure \ref{fig:HNGunnilbo} displays the relation between the year's population size and the chosen harvest. To interpret the figure we compare the outcome of three different strategies:
\begin{enumerate}
\item Stable strategy:  The goal is set so that the median of the population should be 500, in that case both the variable and the fixed model suggest that one should harvest around 170 animals. 
\item Hunter-biased strategy: The goal is set so that the population size should not be below 500 with $90\%$ probability. Here the two models give different results: according to the fixed model the harvest can not be more then 50 animals, while the corresponding number is 110 for the variable model. That is $10\%$ versus $22\%$ of the population at time $t$
\item Forestry-biased strategy: The goal is set so that the population size should   be under 500 with $90\%$ probability. Again the two models give different results, according to the fixed model the harvest must be at least 310 animals, while the corresponding number is 250 for the variable model. That is $62\%$ versus $50\%$ of the population at time $t$.
\end{enumerate}

%

\begin{figure}
\begin{center}
\includegraphics[scale=0.45]{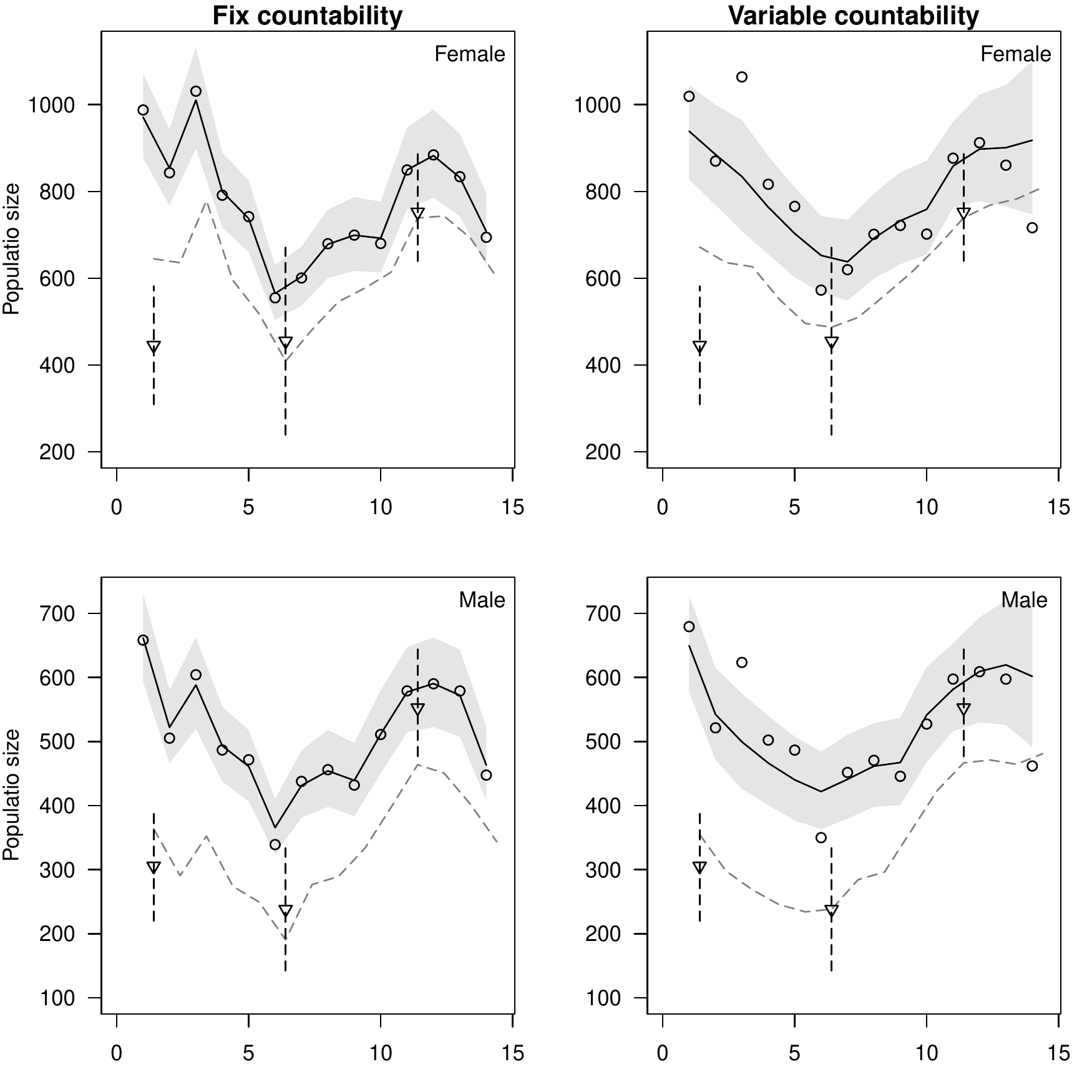}
\end{center}
\caption{Population estimate for the Gunnilbo moose population. The left figures display the estimates for fixed $a$, and the right display the estimates for a variable $a_t$. The solid line is the population before the hunt, where the shaded area is $95\%$ confidence area; the dashed line is the mean population after the hunt; the triangles are the unbiased surveys with $95\%$ confidence interval; and the circles are the observations adjusted with effort times either $\bar{a}$ or $a$.}
\label{fig:NGunnilbo}
\end{figure}

\begin{figure}
\begin{center}
\includegraphics[scale=0.45]{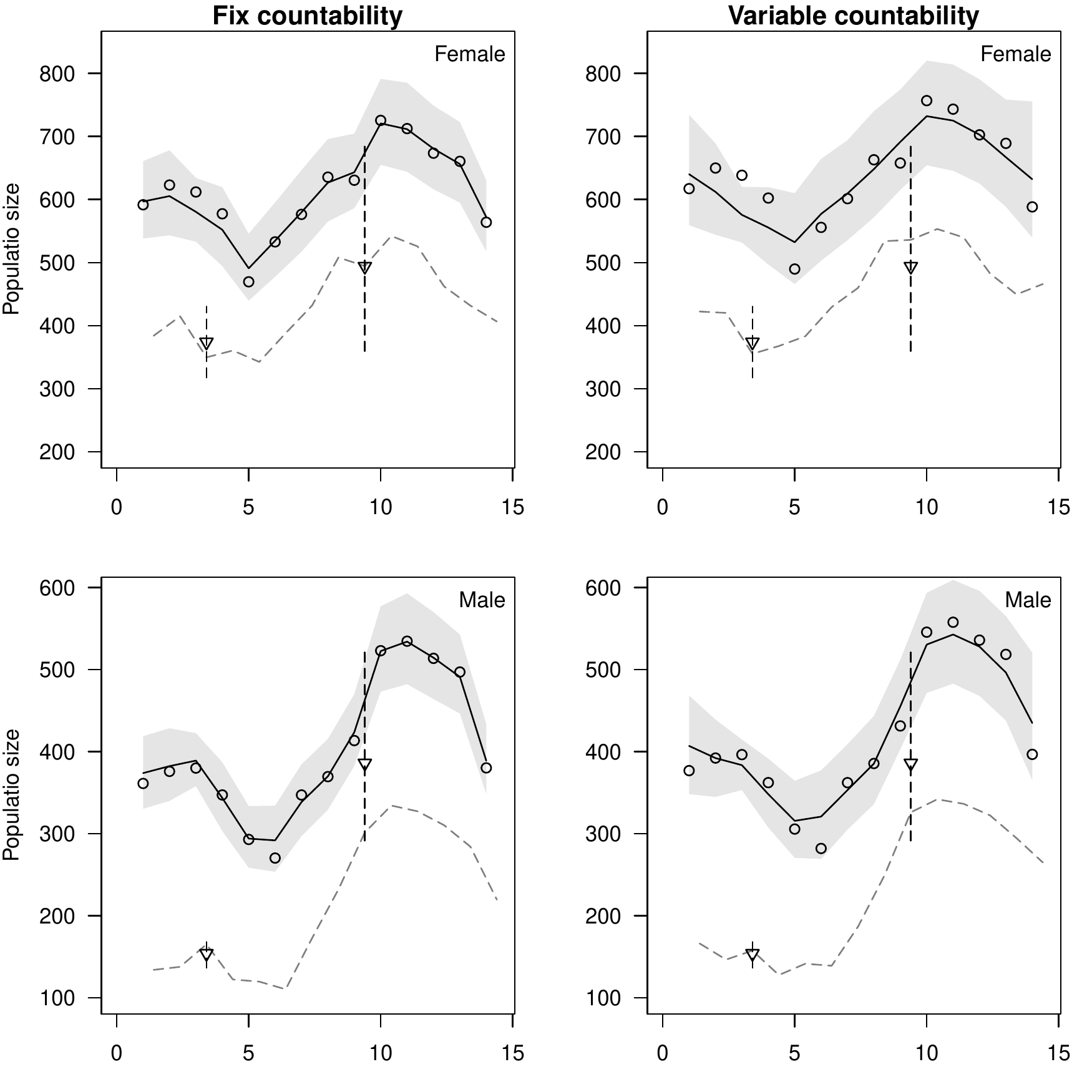}
\end{center}
\caption{Population estimates for the moose population at Tiveden. The legends are the same as in figure \ref{fig:NTiveden}}
\label{fig:NTiveden}
\end{figure}

\begin{figure}
\begin{center}
\includegraphics[scale=0.2]{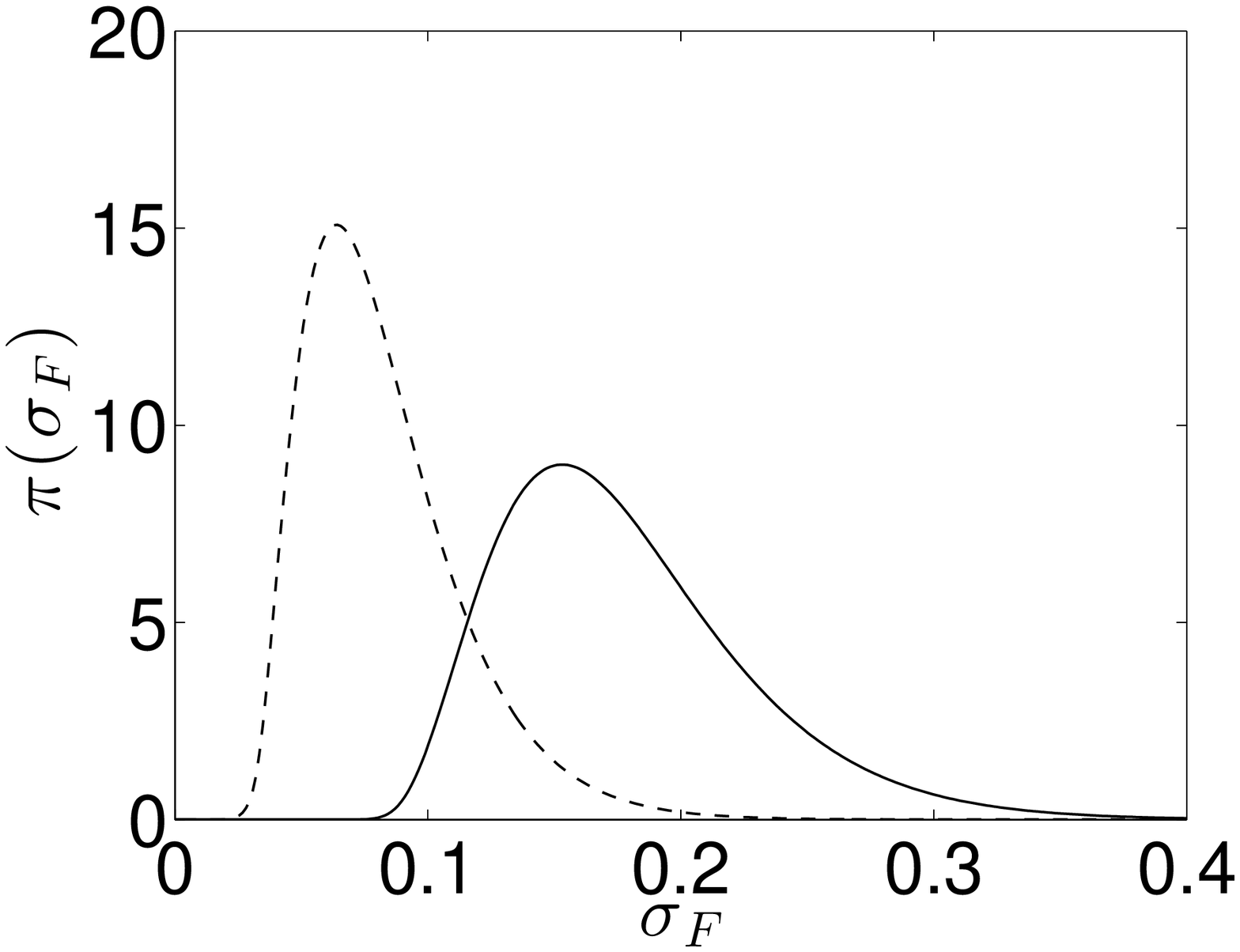}
\includegraphics[scale=0.2]{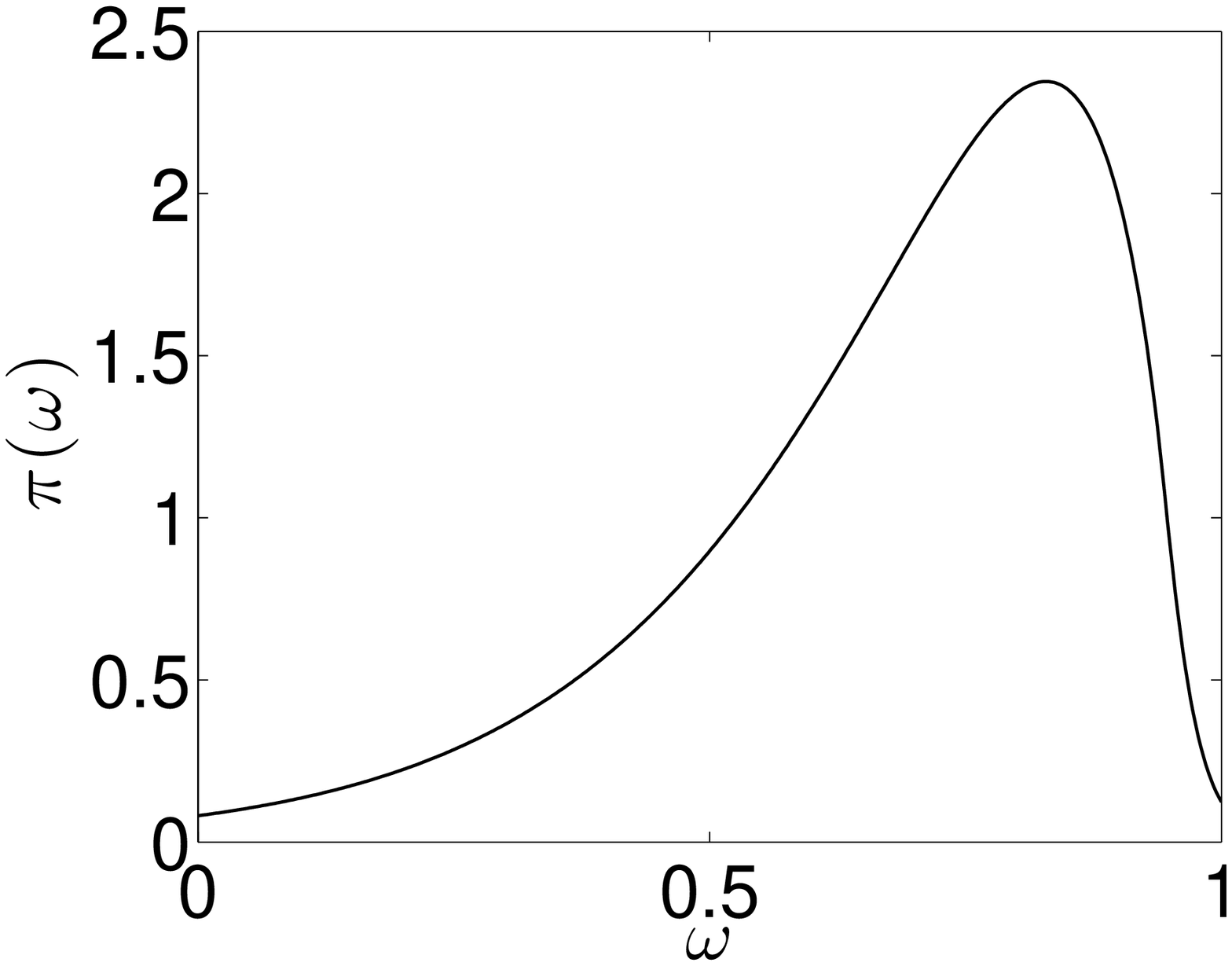}\\
\includegraphics[scale=0.2]{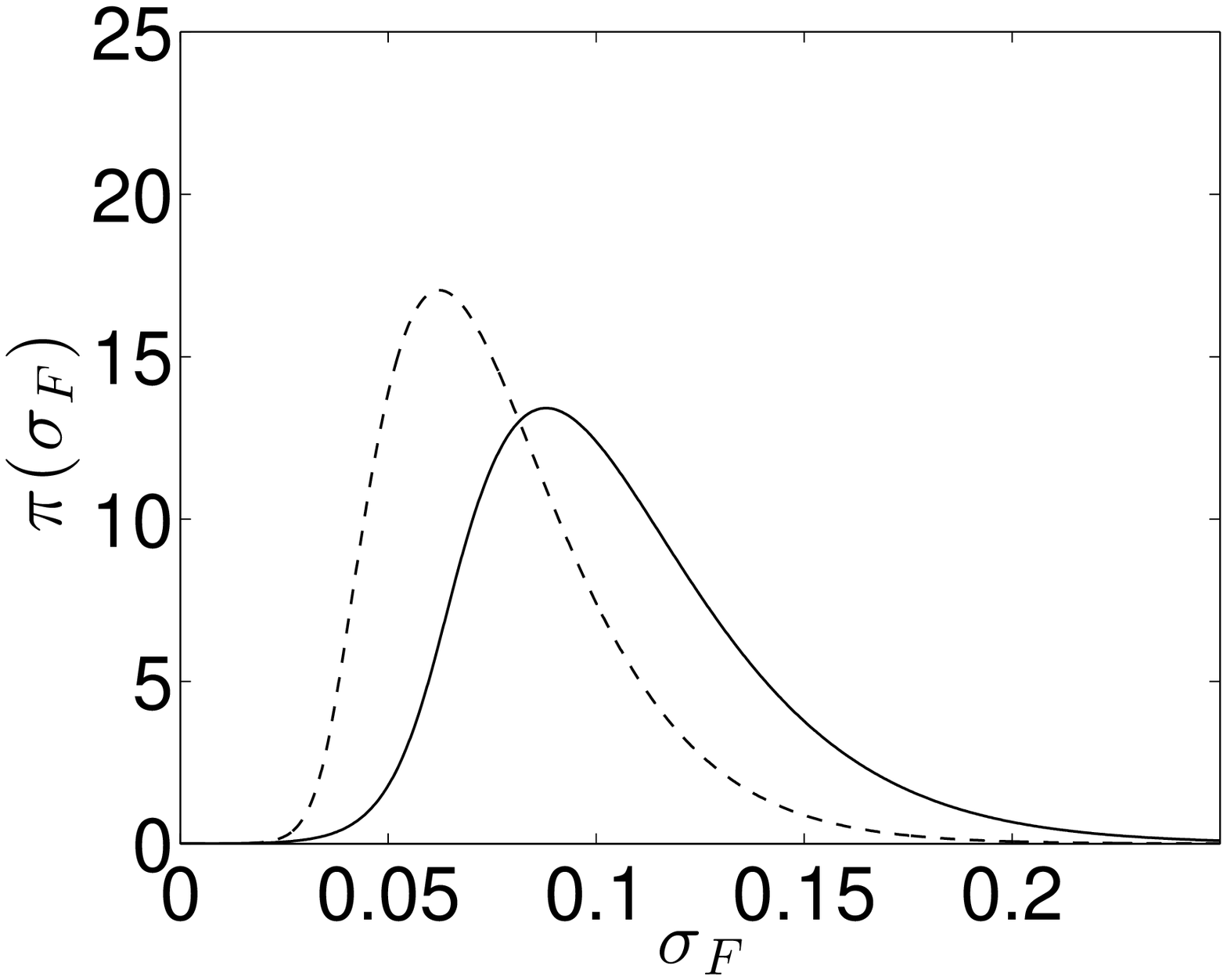}
\includegraphics[scale=0.2]{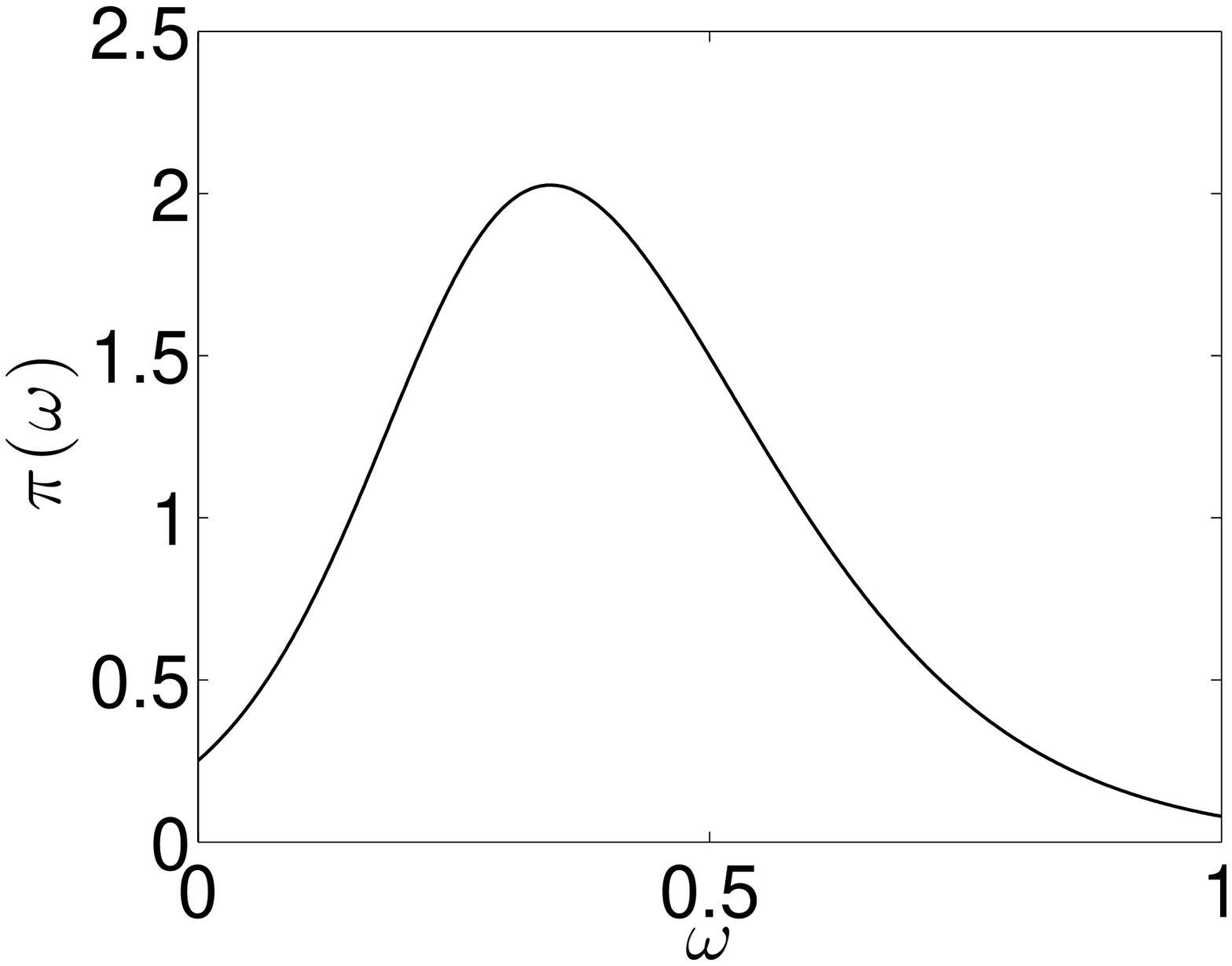}
\end{center}
\caption{The two bottom figures are from Tiveden and the two top figures are from Gunniblo.
Figures to the left displays the posterior distribution of $\sigma_F$ with varying $a$ dashed line and fixed $a$ solid line. 
Figures to the right displays the posterior distribution of $w$ thus according to the data majority of the variance is explained by the processes.}
\label{fig:params}
\end{figure}
\begin{figure}[h!]
\begin{center}
\includegraphics[scale=0.22]{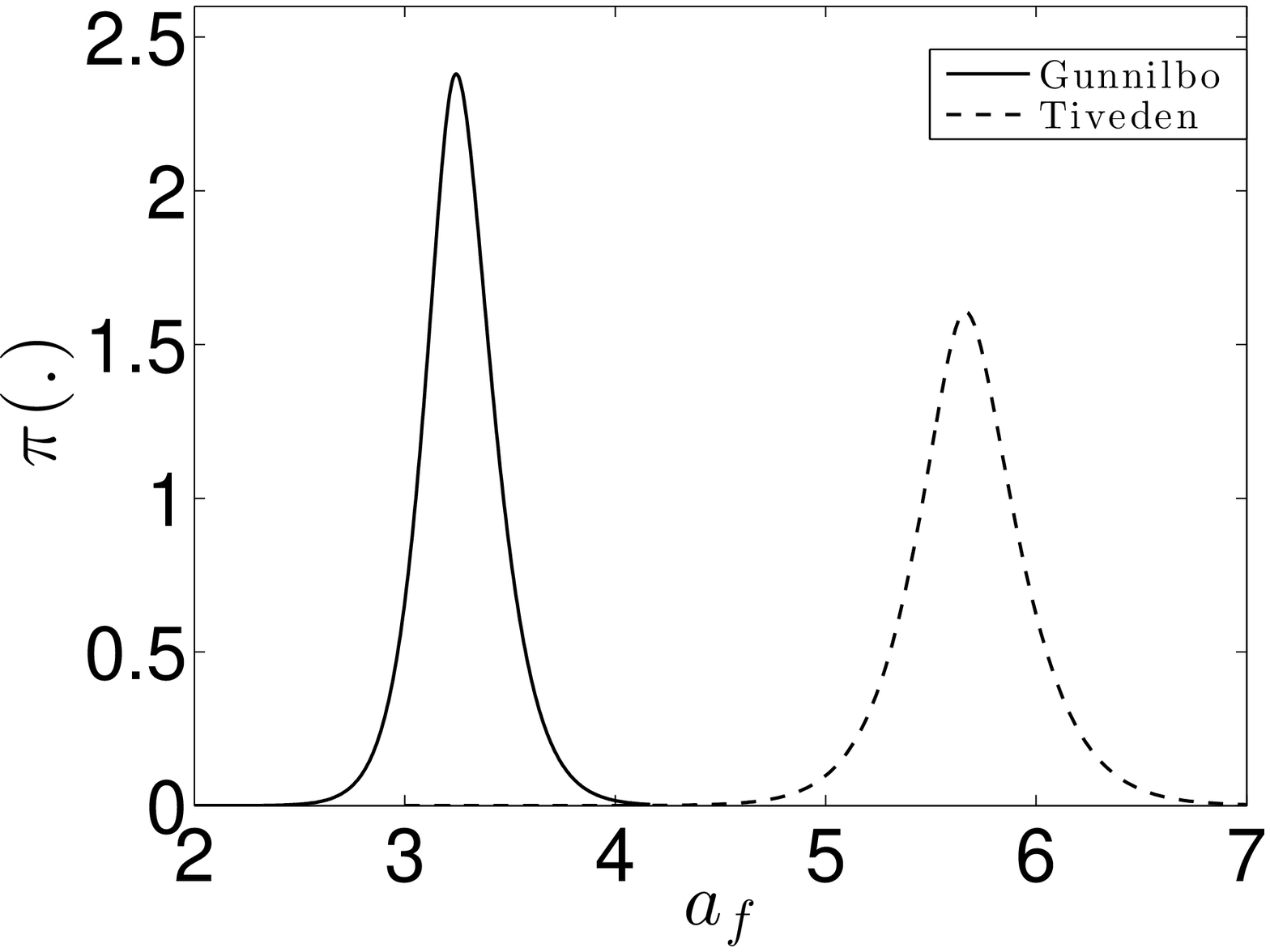}
\includegraphics[scale=0.22]{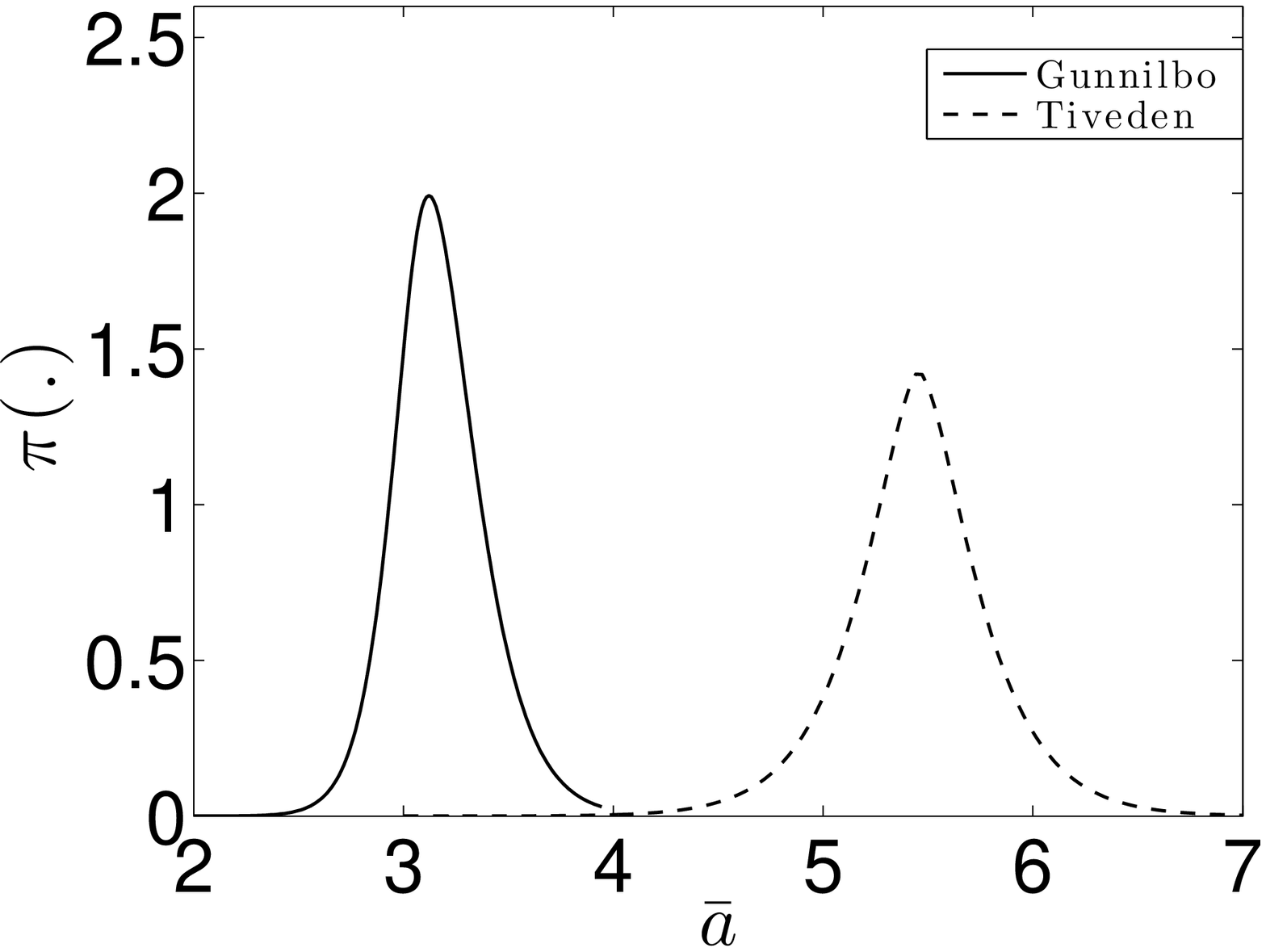} 
\end{center}
\caption{The figures display the posterior distributions for the fixed bias $a_f$,  and $\bar{a}$ for Gunnilbo and Tiveden. 
Both parameters control the expected countability for the indexes for the fixed and variable models. It is important to note that there is a scale difference between the models so one could not use the countability in Gunnilbo for Tiveden and vice versa. }
\label{fig:as}
\end{figure}

\begin{figure}[h!]
\begin{center}
\includegraphics[scale=0.22]{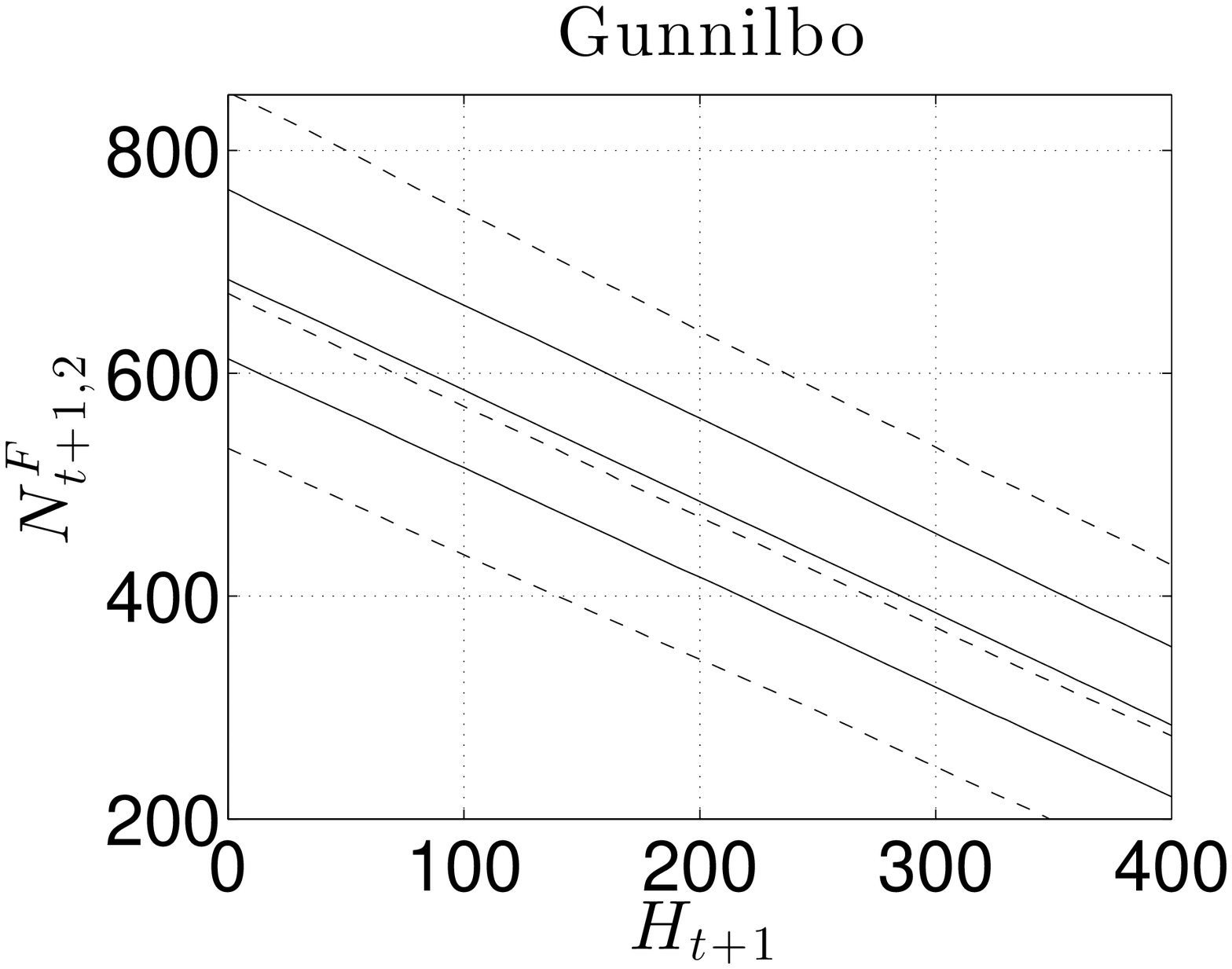}
\includegraphics[scale=0.22]{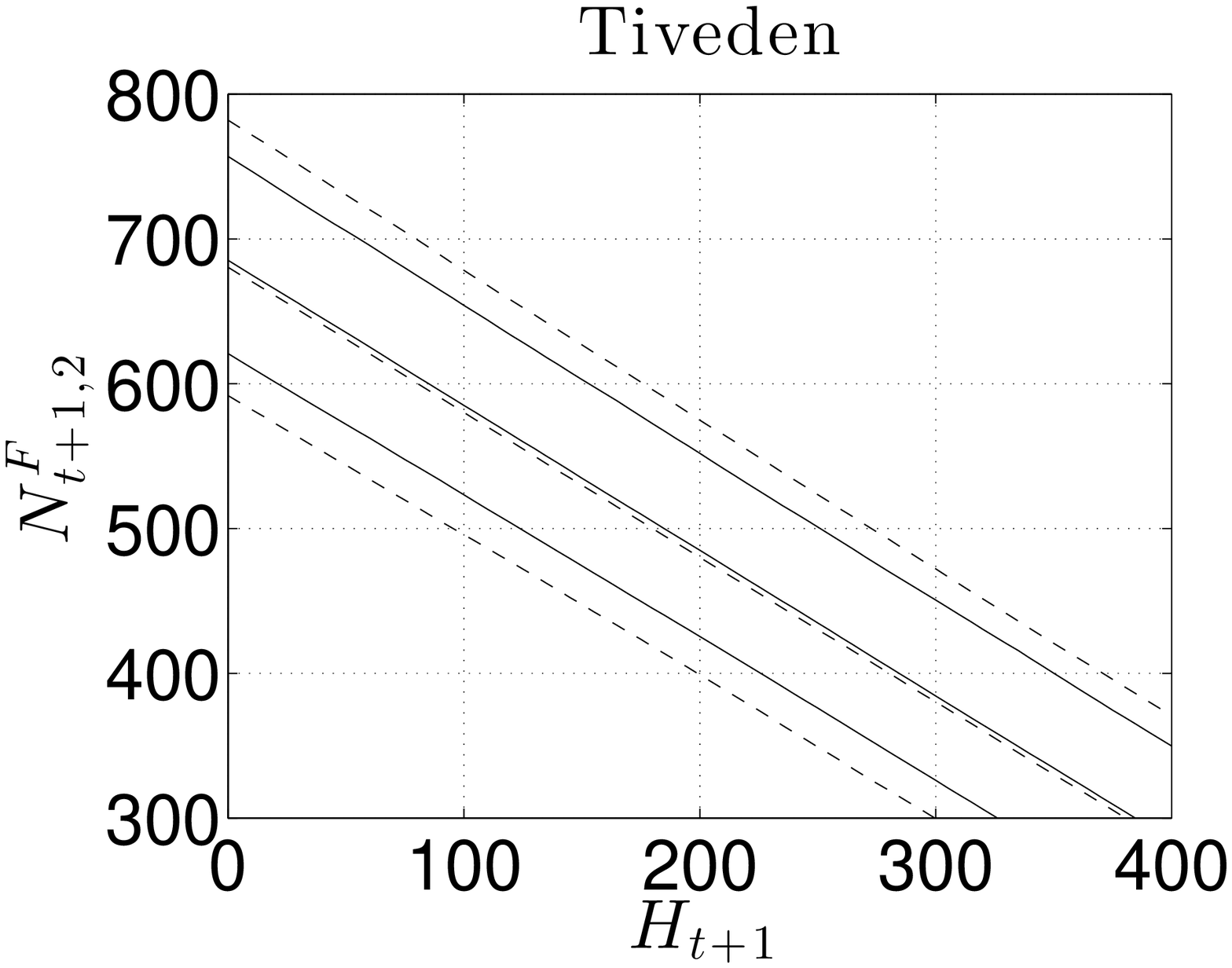} 
\end{center}
\caption{The graphs displays the effect of varying harvest on next years population ($N^F_{t+1}$), assuming a female population of 500 ($N^F_t$). The solid lines are for the variable countability model and dashed are for the fixed countability model. The three lines for each model represent the
$10\%$; $50\%$; and $90\%$ quantiles.}
\label{fig:HNGunnilbo}
\end{figure}

\section{Discussion}
Efficient population management require good information, and population size is one of the most important data one can have for this purpose. This is also true for scientific studies of populations dynamic. The difficulty of getting this information comes stems from the effort needed to observe organisms as well as the cost of applying unbiased methods. This is especially challenging when dealing with local management, where one typically only has access to index data and, at best, infrequent unbiased population estimates. Direct use of  a population size index, always risks overlooking the fact that the observed variation, in the data, is caused by variation in countability, rather then variation in the underlying population size.

The obvious step to achieve the desired population size estimates is to apply a constant countability to calibrate the index. However, as we have shown, this forces the variability in the index data to be explained solely by the population dynamics; resulting in unreliable estimation of population size and its dynamics.  To address this, we proposed a model with a temporally varying countability. The importance of the variability was illustrated by different management objectives for local moose populations and their resulting harvesting strategies. We compared a model with fixed countability to a model with variable countability; the larger population variability in the fixed model resulted in more extreme harvesting strategies.

We want to stress the importance of identifying the sources of variability in population size measurement. In our proposed model there are two potential sources of variation, the variable countability and the population dynamics. Of course, there are other sources, not accounted for like varying growth rate, environmental changes, immigration/emigration etc.  It is not possible to distinguish these sources by population size measurements alone. To analysis more complex situations additional information is needed.  For instance, in our example, if one had access to the weather conditions and the animal counts by day, this could, potentially, be used to explain some of the variation in countability. More challenging is to make inferences about unobserved population processes, for instance demographic structures, immigration/emigration etc. Finding alternative sources of data concerning these characteristics, as well as adding them into the population models possess a challenging, and important, research area in ecology. 
In summary, blindly accepting index data in modeling, can, transfer variability from the countability to population size estimate. This can result in severe consequences for management of animals. By allowing for variable countability, one gets more stable population dynamic.
\section*{Acknowledgements}
The authors would like to thank David Winkler for valuable comments on the manuscript. We are also thankful to Emil Broman and Johan Truv\'e at the Svensk Naturf\"{o}rvaltning AB for providing the moose data. Jonas Wallin has been supported by the Knut and Alice Wallenberg foundation .
\bibliographystyle{plain}
\bibliography{../algtid}
\appendix
\section{Online: Log likelihood}
Here we derive the log likelihood for a more general model than presented in the article (which corresponds to setting $K=0$):
\begin{align*}
N^F_{t} | \, N^F_{t-1/2} &\sim  \plN \left( \log( (r_{t} N^F_{t-1/2} ) , \, 0.75 \sigma_F^2  \right), \\
N^M_{t} \, | N^F_{t-1/2}, N^M_{t-1/2}  &\sim  \plN \left( \log( N^M_{t-1/2} + r_t N^F_{t-1/2}),\,  0.75  \sigma_M^2  \right),
\end{align*}
 where $r_{t} = e^{r- K\log ((N^F_{t-1/2})}$. Note that $K$  is not actually a carrying capacity since it only affects the recruitment rate. This state model is close to an Ornstein-Uhlenbeck (OU) type process, \cite{engen2007heterogeneous}.

Now we derive the log-likelihood for the variables of interest which are $\{\mv{N},\mv{a},\bar{a}, \tau, \omega, \nu, r, K\}$. We split the log likelihood into several parts to simplify understanding
$$
l = l_1 + l_2 + l_3 + l_4 + l_5.
$$
The first three parts come from the latent model, the fourth from the observation equations, and the last component from the prior.
Here 
\begin{align*}
l_1 = &- \frac{T}{2}\left( \log(\sigma_M) + \log(\sigma_F) \right) \\ 
&- 2\sum_{t=1}^{T}  \frac{1}{ \sigma^2_F}\left( \log(N^F_{t} )  - \log \left( (r_t + 1)N^F_{t-\frac{1}{2}} \right)  \right)^2
 + \frac{1}{ \sigma^2_M}\left( \log(N^M_{t} )  - \log \left( N^M_{t - \frac{1}{2}} + r_t N^F_{t-\frac{1}{2}} \right)  \right)^2 \\
 &- \sum_{t=1}^{T}  \log(N^F_{t} )  + \log(N^M_{t} ),
\end{align*}
where $r_t =  \exp \left( r - K \log(N^F_{t-\frac{1}{2}})  \right), \sigma_F= \omega \tau, \sigma_M = \omega ( 1- \tau)$.

\begin{align*}
l_2 = &- \frac{T}{2}\left( \log(\sigma_M) + \log(\sigma_F) \right) \\ 
&- \frac{4}{3}\sum_{t=1}^{T}  \frac{1}{ \sigma^2_F}\left( \log(N^F_{t+\frac{1}{2}} )  - \log \left( N^F_{t} -H^F_t \right)  \right)^2
 + \frac{1}{ \sigma^2_M}\left( \log(N^M_{t+\frac{1}{2}} )  - \log \left( N^M_{t} - H^M_t \right)  \right)^2 \\
 &- \sum_{t=1}^{T}  \log(N^F_{t+\frac{1}{2}} )  + \log(N^M_{t+\frac{1}{2}} ).
\end{align*}

Here the distribution of $N^F_{t},N^M_{t}$ above equation is only well defined if $N^F_{t} \geq H^F_t $ and  $N^M_{t} \geq H^M_t$, thus we have constrained distribution and thus must add the extra term to the log likelihood
\begin{align*}
l_3 = &-  \sum_{t=1}^{T} \log \left(  \Phi \left( ( \log \left( (r_t + 1)N^F_{t-\frac{1}{2}} \right) - \log(H^F_t)  \right) ) \sqrt{0.25}\sigma_F \right) \\
 &-  \sum_{t=1}^{T} \log \left(  \Phi \left( (   \log \left( N^M_{t - \frac{1}{2}} + r_t N^F_{t-\frac{1}{2}} \right)  - \log(H^M_t)  \right) ) \sqrt{0.25}\sigma_M \right).
\end{align*}
Here $\Phi$ is cumulative distribution function for a standard normal distribution.
\begin{align*}
l_4 =& \sum_{t=1}^{T} Y^F_{t,1} \log( a_t N^F_t) - a_t N^F_t E_t + Y^M_{t,1} \log( a_t N^M_t) - a_t N^M_t E_t \\
& - \frac{1}{2} \sum_{t \in T^*} \frac{1}{\sigma^2_{t,D}}  \left( \left(\log \left(N^F_{t+1/2} \right) - Y^F_{t,2} \right)^2 + \left(\log \left(N^M_{t+1/2} \right) - Y^M_{t,2} \right)^2 \right).
\end{align*}
Here $T^*$ are the location where there is Distance observations. And from the priors we have
\begin{align*}
l_5 =&  - T \log(\sigma_a) - \frac{1}{2\sigma^2_a}  \sum_{t=1}^{T}  \left(a_t - \bar{a} \right)^2 \\
 &-\beta_{\tau} \tau   +(\alpha_B - 1) \log(\omega) + (\beta_B - 1)\log \left(1 -\omega \right) + (\alpha_{\nu}-1)\log(\nu )- \beta_{\nu} \nu - \frac{  \left( \bar{a} - \mu_{\bar{a}}\right)^2 }{2\sigma^2_{\bar{a}}} \\
 &- \frac{1}{2 \sigma^2_r} \left(r - \mu_r\right)^2 - \beta_{K}  K.
\end{align*}
Here $\sigma_a = (1 -\omega) \tau$, note that $K,\tau, \nu$are constrained to be larger or equal to zero and $\omega\in [0,1]$.

\section{Online: Estimation}
To generate inference of the parameters, given data, we want to sample from the posterior distribution.  We use a Monte Carlo Markov chain (MCMC) algorithm, more precisely  the Metropolis adjusted Langevinan algorithm (MALA) \cite{grenander1994representations}, to generate draws from the posterior. Further we adapted the algorithm in the framework of adaptive MCMC (see \cite{haario2001,EAMCMC_Roberts}) to get better mixing of the chain. We first run $1e6$ iterations as a burn in, then use $2e6$ samples storing every thousand samples to generate the posterior distributions.  To ensure that the chain reached stationarity we used visual inspection of the trace plot.

\end{document}